\newcommand\copyrighttext{%
	\footnotesize \textcopyright 2026 IEEE. Personal use of this material is
	permitted. Permission from IEEE must be obtained for all other uses, in any
	current or future media, including reprinting/republishing this material
	for advertising or promotional purposes, creating new collective works, for
	resale or redistribution to servers or lists, or reuse of any copyrighted
	component of this work in other works.
}
\newcommand\copyrightnotice{%
	\tikzexternaldisable
	\begin{tikzpicture}[remember picture,overlay]
		\node[anchor=south,yshift=10pt] at (current page.south)
			{\fbox{\parbox{\dimexpr\textwidth-\fboxsep-\fboxrule\relax}{\copyrighttext}}};
	\end{tikzpicture}%
	\tikzexternalenable
}
\newacronym[
    plural=PDEs,
    firstplural=partial differential equations,
]{pde}{PDE}{partial differential equation}
\newacronym[
    plural=ODEs,
    firstplural=ordinary differential equations,
]{ode}{ODE}{ordinary differential equation}
\newacronym[
    plural=DDEs,
    firstplural=delay differential equations,
]{dde}{DDE}{delay differential equation}
\newacronym[
    plural=FDE,
    firstplural=functional differential equations,
]{fde}{FDE}{functional differential equation}
\newacronym[
    plural=ICs,
    firstplural=initial conditions,
]{ic}{IC}{initial condition}
\newacronym[
    plural=DPS,
    firstplural=distributed parameter systems,
]{dps}{DPS}{distributed parameter system}
\newacronym{siso}{SISO}{single-input single-output}
\newacronym{mimo}{MIMO}{multiple-input multiple-output}
\newacronym{cf}{CF}{controller form}
\newacronym{hcf}{HCF}{hyperbolic controller form}
\newacronym{mhcf}{MHCF}{MIMO hyperbolic controller form}
\newacronym{ccf}{CCF}{controller canonical form}
\newacronym{hccf}{HCCF}{hyperbolic controller canonical form}
\newacronym{nlhccf}{NL-HCCF}{nonlinear hyperbolic controller canonical form}
\newacronym{lccf}{L-CCF}{linear controller canonical form}
\newacronym{nlccf}{NL-CCF}{nonlinear controller canonical form}
\newacronym{lhocf}{L-HOCF}{linear hyperbolic observer canonical form}
\newacronym{nlhocf}{NL-HOCF}{nonlinear hyperbolic observer canonical form}
\newacronym{locf}{L-OCF}{linear observer canonical form}
\newacronym{nlocf}{NL-OCF}{nonlinear observer canonical form}
\newacronym{lhcycf}{L-HCyCF}{linear hyperbolic controllability canonical form}
\newacronym{nlhcycf}{NL-HCyCF}{nonlinear hyperbolic controllability canonical form}
\newacronym{lcycf}{L-CyCF}{linear controllability canonical form}
\newacronym{nlcycf}{NL-CyCF}{nonlinear controllability canonical form}
\newacronym{lhoycf}{L-HOyCF}{linear hyperbolic observability canonical form}
\newacronym{nlhoycf}{NL-HOyCF}{nonlinear hyperbolic observability canonical form}
\newacronym{loycf}{L-OyCF}{linear observability canonical form}
\newacronym{nloycf}{NL-OyCF}{nonlinear observability canonical form}
\newacronym[
    plural=BCs,
    firstplural=boundary conditions,
]{bc}{BC}{Boundary Condition}
\newacronym{rhs}{RHS}{right hand-side}
\newacronym{lccm}{LCCM}{leading column coefficient matrix}
\newacronym{qpld}{QPLD}{quasipolynomial long division}
\newacronym{qpldm}{MQPLD}{matrix-valued quasipolynomial long division}
\newtheorem{thm}{Theorem}
\newtheorem{lem}[thm]{Lemma}
\newcommand*{\scale}[2][4]{\scalebox{#1}{$#2$}}%
\newcommand{\pderi}[1][non]{\ensuremath{\ifthenelse{\equal{#1}{non}}{\partial}}{\partial_{#1}}}
\newcommand{\trans}{^\intercal}
\newcommand{\dop}[1]{\,\mathrm{d}#1}	%
\DeclareMathSymbol{\shortminus}{\mathbin}{AMSa}{"39}
\newcommand{\diminput}{{m}}
\newcommand{\dimode}{{n}}
\newcommand{\dimfwd}{{n^-}}
\newcommand{\dimbwd}{{n^+}}
\newcommand{\kron}{\nu}
\newcommand{\kronsum}{\nu_{\Sigma}}
\newcommand{\totalshift}{\tau_{\Sigma}}
\newcommand{\lapsym}{s}
\newcommand{\diffsym}{{\mathrm{d}_t}}
\newcommand{\shiftsym}{\sigma}
\newcommand{\delaysym}{\delta}
\newcommand{\predsym}{\rho}
\newcommand{\delaymat}{\mat\Delta}
\newcommand{\predmat}{\mat P}
\newcommand{\transmatsym}{g}
\newcommand{\transmat}{\mat G}
\newcommand{\mtransmatsym}{\tilde g}
\newcommand{\mtransmat}{\mat{\tilde G}}
\newcommand{\rtransmatsym}{\bar g}
\newcommand{\rtransmat}{\mat{\bar G}}
\newcommand{\ltransmat}{\mat{\hat G}}  %
\newcommand{\ttransmat}{\mat{\check G}}  %
\newcommand{\boundarymat}{\mat H}
\newcommand{\mat}[1]{\ensuremath{\mathbf{#1}}}
\newcommand{\vecorcomp}[2][non]{%
	\ensuremath{%
		\ifthenelse{\equal{#1}{non}}{%
			\bm{#2}
		}{%
			{#2}_{#1}
		}%
	}
}
\DeclareMathOperator{\diag}{diag}
\DeclareMathOperator{\rank}{rank}
\DeclareMathOperator{\lccm}{lccm}
\DeclareMathOperator{\cdeg}{coldeg}
\DeclareMathOperator{\rdeg}{rowdeg}
\newcommand{\reals}{\mathbb{R}}
\newcommand{\realpolys}{\mathbb{R}[\diffsym]}
\newcommand{\realrationals}{\mathbb{R}(\diffsym)}
\newcommand{\ggen}{%
	\Sigma
}
\newcommand{\realgppolys}{%
	\mathcal{R}_{\mathrm{p}}%
}
\newcommand{\realgentirepolys}{%
	\mathcal{R}_{\mathrm{e}}%
}
\newcommand{\realgwholepolys}{\realgentirepolys}
\newcommand{\realgratpolys}{%
	\mathcal{R}_{\mathrm{r}}%
}
\newcommand{\leb}[1][non]{\ensuremath{%
	\ifthenelse{\equal{#1}{non}}{L^2}{L^2_{\mathrm{#1}}}
}}
\newcommand{\sob}[1][non]{H^{#1}}
\newcommand{\entirefuncs}{\mathcal{E}'}
\newcommand{\tvar}{V}
\newcommand{\boundvar}{\mathrm{BV}}
\newcommand{\boundvarleft}{\boundvar^\star}
\newcommand{\Cont}[1][non]{\ensuremath{\ifthenelse{\equal{#1}{non}}{C}{C^{\ttext{#1}}}}}
\newcommand{\leftbound}{{l_1}}
\newcommand{\rightbound}{{l_2}}
\newcommand{\mass}{{M}}
\newcommand{\oinputsym}{u}
\newcommand{\oinput}{\bm\oinputsym}
\newcommand{\odesym}{\xi}
\newcommand{\odestate}{\bm{\odesym}}
\newcommand{\pdesym}{\ensuremath{x}}
\newcommand{\pdestate}{\ensuremath{\bm{\pdesym}}}
\newcommand{\nfbasesym}{\ensuremath{\eta}}
\newcommand{\nfodess}{\ensuremath{\mathrm{O}}}
\newcommand{\nfpdess}{\ensuremath{\mathrm{P}}}
\newcommand{\nfodesym}{\nfbasesym^\nfodess}
\newcommand{\nfodesymdt}{\dot{\nfbasesym}^\nfodess}
\newcommand{\nfodestate}{\bm{\nfbasesym}^\nfodess}
\newcommand{\nfspatsym}{\ensuremath{\tau}}
\newcommand{\nfpdesym}{\ensuremath{\nfbasesym^\nfpdess}}
\newcommand{\nfstate}{\bm{\nfbasesym}}
\newcommand{\nfleftbound}[1]{0}
\newcommand{\nfrightbound}[1]{\hat\nfspatsym_{#1}}
\newcommand{\transstatefwd}{\ensuremath{\pdestate^-}}
\newcommand{\transstatebwd}{\ensuremath{\pdestate^+}}
\newcommand{\flatsym}{y}
\newcommand{\flatout}[1][non]{\vecorcomp[#1]{\flatsym}}
\newcommand{\hcfinputmap}{\mat B^\star}
\newcommand{\stieltjes}[2]{\left(#1 #2\right)(t)}
\newcommand{\heavyside}{\ensuremath{H}}
\newcommand{\degree}[1][non]{\ensuremath{\ifthenelse{\equal{#1}{non}}{^\circ}{#1^\circ}}}
\newcommand{\FT}[1][non]{\ensuremath{\mathfrak F\ifthenelse{\equal{#1}{non}}{}{ \left\{ #1 \right\}}}}
\newcommand{\IFT}[1][non]{\ensuremath{\mathfrak F^{-1}\ifthenelse{\equal{#1}{non}}{}{ \left\{ #1 \right\} }} }
\newcommand{\atan}[1][non]{\arctan\ensuremath{\ifthenelse{\equal{#1}{non}}{}{ \left( #1 \right)}}}
\newcommand{\E}[1][non]{\ensuremath{\ifthenelse{\equal{#1}{non}}{E}{E\left\{ #1 \right\}}}}
\newcommand{\chis}[2][non]{\ensuremath{\chi^2 \ifthenelse{\equal{#1}{non}}{}{ \left(#1,#2\right)}}}
\newcommand{\R}[2][xx]{\ensuremath{R_{#1}\ifthenelse{\equal{#2}{}}{}{\left( #2 \right)}}} %
\renewcommand{\S}[2][xx]{\ensuremath{S_{#1}\ifthenelse{\equal{#2}{}}{}{\left( #2 \right)}}} %
\newcommand{\Gammaf}[1][non]{\ensuremath{\Gamma\ifthenelse{\equal{#1}{non}}{}{ \left( #1 \right)}}}
\definecolor{mplblue}{RGB}{31, 119, 180}
\definecolor{mplorange}{RGB}{255, 127, 14}
\definecolor{mplgreen}{RGB}{44, 160, 44}
\definecolor{mplred}{RGB}{214, 39, 40}
\definecolor{mplpurple}{RGB}{148, 103, 189}
\definecolor{mplbrown}{RGB}{140, 86, 75}
\definecolor{mplpink}{RGB}{227, 119, 194}
\definecolor{mplgray}{RGB}{127, 127, 127}
\definecolor{mplolive}{RGB}{188, 189, 34}
\definecolor{mplcyan}{RGB}{23, 190, 207}
\definecolor{umitred}{RGB}{128, 19, 50}
\definecolor{umitblue}{RGB}{0, 53, 103}
\definecolor{umitlightblue}{RGB}{0, 97, 163}
\definecolor{umitgreen}{RGB}{0, 92, 100}
\definecolor{umitlightgreen}{RGB}{115, 167, 156}
\definecolor{umitpurple}{RGB}{85, 25, 95}
\definecolor{umitgold}{RGB}{171, 141, 37}
\definecolor{umitlightgold}{RGB}{200, 183, 118}
\definecolor{umitlightgray}{RGB}{217, 218, 219}
\definecolor{umitgray}{RGB}{143, 143, 141}
\definecolor{umitdarkgray}{RGB}{90, 79, 74}
\crefname{app}{Appendix}{Appendeces}
\crefname{lem}{Lemma}{Lemmas}
\crefname{cor}{Corollary}{Corollaries}
\crefname{rem}{Remark}{Remarks}
\crefname{alg}{Algorithm}{Algorithms}
\title{\LARGE\bf
	On the controller form for linear hyperbolic 
	MIMO systems with dynamic boundary conditions%
} 
\author{%
	Stefan Ecklebe$^{1}$ 
	and Frank Woittennek$^{1}$%
	\thanks{%
		*This research was funded in whole or in part by the Austrian Science
		Fund (FWF) [10.55776/I6519].
		For open access purposes, the author has applied a CC BY public
		copyright license to any author accepted manuscript version arising
		from this submission.
	}%
	\thanks{%
		$^{1}$	
		Institute of Automation and Control Engineering,
		UMIT TIROL, 6060 Hall in Tirol, Austria,
		{\tt\small
			\{stefan.ecklebe, frank.woittennek\}@umit-tirol.at
		}
	}%
}
\begin{document}

\maketitle
\thispagestyle{empty}
\copyrightnotice
\pagestyle{empty}
\begin{abstract}
	This contribution develops an algebraic approach to obtain a
controller form for a class of linear hyperbolic MIMO systems,
bidirectionally coupled with a linear ODE system at the unactuated boundary.
After a short summary of established controller forms for SISO and MIMO ODE as
well as SISO hyperbolic PDE systems, it is shown that the approach to state a
controller form for SISO systems cannot easily be transferred to the MIMO case
as it already fails for a very simple example.
Next, a generalised hyperbolic controller form with different variants is
proposed and a new flatness-based scheme to compute said form is presented. 
Therein, the system is treated in an algebraic setting
where quasipolynomials are used to express the predictions and delays in the
system.
The proposed algorithm is then applied to the motivating example.

\end{abstract}
\begin{keywords}
	
Distributed parameter systems, 
Hyperbolic systems,
Flatness,
Controller form,
Linear algebra,
Quasipolynomials

\end{keywords}

\section{Introduction}
Canonical forms serve as a cornerstone for systematic feedback
design in finite dimensions. Their counterparts for \glspl{dps}
have been studied since the late 1970s. 
For linear hyperbolic \glspl{pde}, \cite{Russel1978}
connected controllability and spectral assignment to
\glspl{ccf} and neutral delay equations.  
More recent works linked the aforementioned delay interpretations to flatness
based parametrisations \cite{FMRR95cdc,FliMou99} (as well as
\cite{CookeKrumme1968jmaa} for early ideas in this direction) and established
the connection between such parametrisations and the \gls{ccf} for more
general systems.
In particular, \cite{Woittennek2012-ifacwc} introduced a \gls{hccf} for linear
hyperbolic SISO systems comprising general interconnections of hyperbolic
\glspl{pde} and \glspl{ode} (see also \cite{Woittennek2011ifac}) by generalising
the familiar chain-of-integrators structure by prepending a transport equation
to represent the delays of the system imposed by the hyperbolic \glspl{pde}.
Therein, the canonical coordinates are deduced from the feasible \glspl{ic} for
a \gls{fde} arising from a flatness-based parametrisation of the system's input
while the required state transforms are computed using the flatness-based
parametrisation of the system's original state variables.
Aside from the above-mentioned state-space approaches, in  \cite{Woittennek2010-siam,Gehring2013}  flatness-based parametrisations also served as means for purely algebraic analysis and control design. These methods employ and partially generalise results from
\cite{Brethe1996} and \cite{GluesingLuerssen1997}.

As many relevant technological applications have multiple in- and outputs,
several control approaches are available, e.g.~ \cite{Knueppel2014} using
flatness-based methods or \cite{Gabriel2024} employing backstepping designs.
However, regarding controller forms a direct extension of the existing work to
the \acs{mimo} case is difficult, as such systems can exhibit
different delays depending on the considered input which together with the
bidirectionally coupled ODE result in an \gls{fde} system for which feasible
\glspl{ic} are not obvious, thus, negating the approach from the \acs{siso} case.

Therefore, the goal of this contribution is to introduce a \gls{hcf} for a class
of linear hyperbolic \acs{mimo} systems with dynamic boundary conditions and to
develop a delay reduction algorithm such that the resulting
\gls{fde} allows for a direct computation of the controller form.
To do so, we denote the input-output relation using quasipolynomials,
whose use
in the control of hyperbolic systems was already discussed in
\cite{Loiseau1998,Benarab2022},
and restate the requirements to introduce a \gls{hcf}
as column-reduced properties on the resulting quasipolynomial matrices.
Next, in contrast to the established methods and due to the incommensurate
delays in the considered systems as well as the interplay with the boundary
system, we propose a modified quasipolynomial division algorithm to compute
the required transformation to establish said properties.

Following some preliminaries, where we briefly restate established controller 
forms and demonstrate the problems that arise if a hyperbolic controller form
is to be introduced even for the most simple \acs{mimo} system,
a general \acs{mimo} \gls{hcf} and several subforms are given in \Cref{sec:mimo_hcf}.
The main part of the contribution in \Cref{sec:computation} then discusses
how a \gls{hcf} can be obtained for a specific class of systems 
Finally, \Cref{sec:example} traces the procedure for an example system.

\section{Preliminaries}

\subsection{Notation}

The variables $z$ and $t$ denote space and time.
The first and second derivative of $x(t)$ is denoted
by $\dot x(t)$ and $\ddot x(t)$, while the $n$-th 
derivative is given by $x^{(n)}(t)$
and the partial derivative of $x(z,t)$ w.r.t.~$z$
by $\partial_z x(z,t)$.

To treat \glspl{fde} in an algebraic setting, we identify the
operators therein with their Fourier-Laplace transforms 
and treat the application of the transformed operators
as convolutions.
Thus, we write $\mathcal{L}\{\diffsym\} = \lapsym$ for
differentials, while  
the shifts
$(\shiftsym^\tau x)(t) = x(t+\tau)$ 
with $\tau \in \reals$
are denoted as
$\mathcal{L}\{\shiftsym^\tau\}(\lapsym) = \exp(\lapsym\tau)$.

The shorthand $\leb([a,b])$ is used for the space of square integrable
functions on $[a,b]$, while $\sob[n]([a,b])$ stands for the space of functions whose $n$-th
derivative belongs to $\leb([a,b])$
and $\entirefuncs$
for Schwartz-Distributions with compact support.
With $\tvar_f([a,b])$ denoting the total variation of a function $f$ on the
interval $[a,b]$, $\boundvar([a,b])$ contains functions of bounded
variation while members of its subset $\boundvarleft([a,b])$ 
satisfy
$\lim_{\epsilon \to 0} \tvar_f([b,b-\epsilon)) = 0$.

Furthermore, $\realpolys$ is the space of polynomials in $\diffsym$ with coefficients in $\reals$
while $\realrationals$ consists of rational expressions in $\diffsym$. 
Regarding a polynomial matrix $\mat X(\diffsym) \in \reals^{n\times n}[\diffsym]$, we denote the \gls{lccm} w.r.t.~$\diffsym$ by $\lccm_{\diffsym}{\mat X(\diffsym)}$ and say that $\mat X(\diffsym)$ is column reduced w.r.t.~$\diffsym$ if $\lccm_{\diffsym}{\mat X(\diffsym)}$ has full generic rank.
Finally, we call $\mat X(\diffsym)$ unimodular over $\realpolys$ if its determinant is a nonzero constant (and therefore independent of $\diffsym$).

\subsection{Established controller forms}%

\subsubsection{ODE SISO systems}%

For this class, \cite{Kalman1963} defines the controller canonical form
\begin{subequations}
	\begin{align}
		\nfodesymdt_{1}(t) &= \nfodesym_2(t)\\
		&\;\;\vdots\nonumber\\
		\nfodesymdt_{n-1}(t) &= \nfodesym_n(t)\\
		\nfodesymdt_{n}(t) &= -\bm{a}\trans\nfodestate(t) + \oinputsym(t)
	\end{align}%
	\label{eq:siso_cf}%
\end{subequations}
with state 
$\nfodestate(t) 
= (\nfodesym_1(t),\dotsc,\nfodesym_n(t))\trans \in \reals^\dimode$,
coefficients $\bm a \in \reals^n$
and input $\oinputsym(t) \in\reals$, which exists for every controllable
system.
Note that the \enquote{output} $\nfodesym_1(t)$ of the integrator chain
is a flat output of the system.

\subsubsection{ODE MIMO systems}%
\label{sss:lumped_mimo}

Extending the \acs{siso} framework to systems with
inputs $\oinput(t) \in \reals^m$, \cite{Luenberger1967} 
states a controller form\footnote{%
	As these forms are not unique in the \acs{mimo} case, 
	the ancillary adjective canonical will be omitted from now on.
	Moreover, note that \eqref{eq:mimo_cf_input} explicitly allows
	for all inputs to occur in the \enquote{input} of the subsystems
	integrator chain with linear factors $\neq 1$ as long as
	$\mat B = (\bm b_1,\dotsc,\bm b_m)\trans$ has full rank.
}, with subsystems
\begin{subequations}
	\begin{align}
		\nfodesymdt_{i,1}(t) &= \nfodesym_{i,2}(t)\\
		&\;\;\vdots\nonumber\\
		\nfodesymdt_{i,\kron_i-1}(t) &= \nfodesym_{i,\kron_i}(t)\\
		\label{eq:mimo_cf_input}%
		\nfodesymdt_{i,\kron_i}(t) &= 
		-\sum_{j=1}^{m} \bm{a}_{ij}\trans\nfodestate_j(t) 
			+ \bm{b}_i\trans\oinput(t)
	\end{align}
	\label{eq:mimo_cf}%
\end{subequations}
for $i\in\{1,\dotsc,\diminput\}$,
with the Kronecker indices $\kron_i$ and
the state vector 
$\nfodestate_i(t) = 
	(\nfodesym_{i,1}(t),
	\dotsc,
	\nfodesym_{i,\kron_i}(t))\trans 
	\in \reals^{\kron_i}$
of the $i$-th subsystem.
Similarly to the \acs{siso} case, the \acs{mimo} 
controller form state $\nfodestate(t) \in \reals^{\kron_\Sigma}$
with $\kronsum \coloneq \sum_{i=1}^{m}\kron_i$ 
and input $\oinput(t)$
can be easily parametrised by a flat output 
$(\nfodesym_{1,1}(t),\dotsc,\nfodesym_{m,1}(t))$.

Note that, in principle, there exist several ways to distribute the $\kron_\Sigma$
integrators over the $\diminput$ chains but only one that yields
the form \eqref{eq:mimo_cf} without input derivatives.

\subsubsection{Hyperbolic SISO systems with boundary dynamics}%

Initially, the work \cite{Russel1978} introduced a controller
form for boundary controlled hyperbolic systems 
that takes the form of decoupled transport systems.
These results were then extended to dynamic boundary
conditions in \cite{Woittennek2012-ifacwc,Woittennek2013-cpde},
yielding the controller form
\begin{subequations}
	\begin{align}
		\nfodesymdt_{1}(t) &= \nfodesym_2(t)\\
		&\;\;\vdots\nonumber\\
		\nfodesymdt_{n-1}(t) &= \nfodesym_n(t)\\
		\nfodesymdt_{n}(t) &= \nfpdesym(\nfleftbound{},t) \\
		\partial_t \nfpdesym(\nfspatsym,t) &= 
			\partial_\nfspatsym \nfpdesym(\nfspatsym,t),
			\quad \nfspatsym \in [\nfleftbound{},\nfrightbound{})\\
		\nfpdesym(\nfrightbound{},t) &= 
		- \stieltjes{a^\star}{\nfpdesym}
			- \bm{a}\trans\nfodesym(t)
			+ \oinputsym(t)
	\end{align}%
	\label{eq:siso_hcf}%
\end{subequations}
with state 
$\nfstate(\cdot,t)
= (\nfodestate(t), \nfpdesym(\cdot,t)) 
\in 
\reals^n 
\times 
\leb([\nfleftbound{},\nfrightbound{}])
$
and the Stieltjes integral
$
\stieltjes{a^\star}{\nfpdesym}
=\int_{\nfleftbound{}}^{\nfrightbound{}}\nfpdesym(\nfspatsym,t)\dop \bar a(\nfspatsym)
$
where $\bar a \in BV^\star([\nfleftbound{},\nfrightbound{}])$
allows for distributed feedback of $\nfpdesym(z,t)$
as well as point evaluations for $z\in[\nfleftbound{},\nfrightbound{})$.
Similar to the SISO case, $\nfodesym_1(t)$ is a flat output of the system.

\subsection{Benchmark system}%
\label{ssc:benchmark}

Consider a system of two vibrating strings
\begin{subequations}
	\begin{align}
		\partial_t^2 x_1(z,t) - \partial_z^2 x_1(z,t) &= 0\quad z \in (-\leftbound, 0)\\
		\partial_t^2 x_2(z,t) - \partial_z^2 x_2(z,t) &= 0\quad z \in (0, \rightbound),
	\end{align}%
	\label{eq:zuazua_pdes}%
\end{subequations}
coupled with a point mass $\mass$ at $z=0$ and thus
subject to the dynamic boundary conditions
\begin{subequations}
	\begin{align}
		\label{eq:zuazua_bc_diric}
		x_1(0,t) = x_2(0,t) &= \odesym(t)\\
		\label{eq:zuazua_bc_dyn}
		\mass \ddot\odesym(t) &= -\partial_z x_1(0,t) + \partial_z x_2(0,t),
	\intertext{where $\odesym(t)$ denotes the mass position as well as}
		\label{eq:zuazua_bc_inp1}
		\partial_z x_1(-\leftbound,t) &= -u_1(t)\\
		\label{eq:zuazua_bc_inp2}
		\partial_z x_2(\rightbound,t) &= u_2(t),
	\end{align}%
	\label{eq:zuazua_bcs}%
\end{subequations}
with state\footnote{%
	In detail, 
	$\bm x(\cdot,t) \!=\! (
	\odesym(t),\dot \odesym(t),
	x_1(\cdot,t), \partial_t x_1(\cdot,t),
	x_2(\cdot,t), \partial_t x_2(\cdot,t)
	)\trans$
	and
	$ \mathcal{X} = \{
	\reals^2 \times
	\sob[1]([-\leftbound,0]) 
	\times
	\leb([-\leftbound,0])
	\times
	\sob[1]([0,\rightbound]) 
	\times
	\leb([0,\rightbound])
	\;\vert\;
	\odesym(t) = x_1(0,t) = x_2(0,t)
	\}$.
}
$\bm x(\cdot,t) \in \mathcal{X}$ as
initially discussed in \cite{Hansen1995}.

\subsection{Input parametrisation by a flat output}%
\label{sub:Parametrisation}

Starting at the interior mass, by using the tuple
\begin{equation}
	\begin{pmatrix}
		\flatout[1](t),
		\flatout[2](t)
	\end{pmatrix}\trans
	= 
	\begin{pmatrix}
		\odesym(t), 
		\partial_z x_1(0,t) + \partial_z x_2(0,t)
	\end{pmatrix}\trans
	\label{eq:flat_out}%
\end{equation}
with $y_1 \in \sob[2](\reals)$ as well as
$ y_2 \in \leb(\reals) $
we directly obtain $\odesym(t) = x_1(0,t) = x_2(0,t) = \flatout[1](t)$
and via \eqref{eq:zuazua_bc_dyn}
\begin{subequations}
	\begin{align}
		\partial_z x_1(0,t) &= \frac{1}{2}\left(\flatout[2](t) - \mass \diffsym^2\flatout[1](t)\right)\\
		\partial_z x_2(0,t) &= \frac{1}{2}\left(\flatout[2](t) + \mass \diffsym^2\flatout[1](t)\right).
	\end{align}%
	\label{eq:zuazua_param_xdz0}%
\end{subequations}
By using the deflection $x_i(0,t)$ and its gradient $\partial_z x_i(0,t)$
as Cauchy data, 
the Method of Characteristics
yields the profiles
\begin{equation}
	x_i(z,t) \!
		=
		\!\frac{1}{2} \!\left(
			\!
			\odesym(t-z) 
			\!+\!\odesym(t+z)
		\!
		+
		\!
		\!
		\int_{t-z}^{t+z} 
			\!
			\partial_z x_i(0, \nfspatsym) \dop{\nfspatsym} 
		\right)
	\label{eq:zuazua_param_x}%
\end{equation}
for $i=1,2$ (cf.~\cite{Hansen1995,Woittennek2012-ifacwc}),
which can be written as
\begin{equation}
	x_i(z,t) = 
		C_z x_i(0,t) 
		+ S_z \partial_z x_i(0,t)
	\label{eq:profile_sol_alg}%
\end{equation}
in the algebraic setting,
where 
$C_z 
	= \frac{1}{2}(\shiftsym^z+\shiftsym^{-z})$
and 
$S_z 
	= \frac{1}{2}\frac{\shiftsym^z-\shiftsym^{-z}}{\diffsym}
$
employ $\shiftsym$ as well as $\diffsym$ for shifts and derivatives.

Herein, note that although $S_z$ contains a division by
$\diffsym$ which results in an integration in the parametrisation of
$x_i(z,t)$ by $\flatout(t)$, this does not contradict the idea
of a unique parametrisation as $S_z \in \entirefuncs$ has compact support and
the corresponding integral in \eqref{eq:zuazua_param_x} therefore is definite.

With the parametrisation of the Cauchy data from \eqref{eq:zuazua_param_xdz0},
\eqref{eq:profile_sol_alg} then yields the profiles, thus, the input
parametrisation
\begin{equation}
	\begin{pmatrix}
		\oinputsym_1(t)\\
		\oinputsym_2(t)\\
	\end{pmatrix}
	= 
	\begin{pmatrix}
		\frac{m}{2} \diffsym^2 C_\leftbound + \diffsym S_\leftbound
			& -\frac{1}{2} C_\leftbound\\
		\frac{m}{2} \diffsym^2 C_\rightbound + \diffsym S_\rightbound
			& \frac{1}{2} C_\rightbound
	\end{pmatrix}
	\begin{pmatrix}
		\flatsym_1(t)\\
		\flatsym_2(t)\\
	\end{pmatrix}
	\label{eq:zuazua_param_u}%
\end{equation}
is obtained via \eqref{eq:zuazua_bc_inp1} and \eqref{eq:zuazua_bc_inp2}
showing that \eqref{eq:flat_out} is a flat output of the system 
\eqref{eq:zuazua_pdes}-\eqref{eq:zuazua_bcs}.

\subsection{Stating a controller form}%

As known from e.g.~\cite{Woittennek2012-ifacwc}, the problem
to be solved for stating a controller form is to find
a particular realisation of the input-output relation
\eqref{eq:zuazua_param_u}.
A possible way to do so is to solve \eqref{eq:zuazua_param_u} for the highest
derivatives of the largest predictions of $\flatsym_1(t)$ as well as
$\flatsym_2(t)$ and to check for which initial conditions the resulting
\gls{fde} system is well-posed.

However, the leading column coefficient matrix
\begin{equation}
	\label{eq:zuazua_lccm}
	\hat\transmat =
	\frac{1}{4}
	\begin{pmatrix}
		0 & 0\\
		\mass & 1\\
	\end{pmatrix},
\end{equation}
obtained by taking the coefficients of the highest derivatives of the largest
predictions in each column on the right hand-side of \eqref{eq:zuazua_param_u}, 
is singular.
This implies that \eqref{eq:zuazua_param_u} cannot directly be written as an
explicit \gls{fde} system thus negating our approach.
Therefore, more steps must be taken in the \acs{mimo} case, 
which will be discussed in \Cref{sec:computation} after we give
a definition of a controller form for this system class.

\section{A hyperbolic controller form}%
\label{sec:mimo_hcf}

In this section we define a general controller form for linear hyperbolic \acs{mimo} systems with dynamic boundary conditions and state several specific variants of the latter which may be helpful in different settings.

\subsection{Definition}%
\label{sub:Definition}

By directly applying the extension of the \acs{siso} controller form  by a
transport system from \eqref{eq:siso_hcf} to the \acs{mimo} controller form
\eqref{eq:mimo_cf}, for $i\in\{1,\dotsc,m\}$, we obtain the subsystems
\begin{subequations}
	\begin{align}
		\nfodesymdt_{i,1}(t) &= \nfodesym_{i,2}(t)\\
		&\;\;\vdots\nonumber\\
		\nfodesymdt_{i,\kron_i-1}(t) &= \nfodesym_{i,\kron_i}(t)\\
		\nfodesymdt_{i,\kron_i}(t) &= \nfpdesym_{i}(\nfleftbound{i},t) \\
		\partial_t \nfpdesym_{i}(\nfspatsym,t) &= 
		\partial_\nfspatsym \nfpdesym_{i}(\nfspatsym,t),
		\quad \nfspatsym \in [\nfleftbound{i},\nfrightbound{i})\\
			\nfpdesym_{i}(\nfrightbound{i},t) &= 
			-\sum_{j=1}^{m}\Big(
				\stieltjes{a^\star_{ij}}{\nfpdesym_{j}}
				+ \bm{a}_{ij}\trans\nfodesym_{j}(t) 
				-\stieltjes{b_{ij}^\star}{\oinputsym_j}
			  \Big)
		\label{eq:mimo_hcf_input}%
	\end{align}%
	\label{eq:mimo_hcf}%
\end{subequations}
with
state 
$
\nfstate(\cdot,t) \in 
\reals^{\kron_\Sigma}
\times
\prod_{i=1}^m \leb([\nfleftbound{i},\nfrightbound{i}])
$.
Herein, the expressions
$
\stieltjes{b_{ij}^\star}{\oinputsym_j}
= \int_{\underline{\nfspatsym}_{ij}\!}^{\bar\nfspatsym_{ij}} 
	\oinputsym_j(t+\nfspatsym)\dop \bar b_{ij}(\nfspatsym)
$
correspond to convolutions that -- depending on the $\bar b_{ij}(\tau)$ --
can represent input derivatives, predictions or delays.
Furthermore, the input map 
$\mat B^\star = (b^\star_{ij})$
is invertible w.r.t.~convolution. 

In \eqref{eq:mimo_hcf}, each block consist of a transport 
system whose output feeds into an integrator chain.
However, the input of each block's transport system now depends on the system
input as well as the \acs{pde} and \acs{ode} states of all blocks.

Note that similar to the integrator chains from the lumped MIMO case in
\cref{sss:lumped_mimo}, there exist different ways to split the overall system
delays into the $m$ transport systems.

In the following, we will derive selected special cases of \eqref{eq:mimo_hcf}
by further specifying the functions $\bar b_{ij}(\nfspatsym)$.

\subsection{A classic controller form}%

To recover a classic form without input derivatives, predictions or delays
in $\hcfinputmap$,
we restrain the functions to $\bar b_{ij}(\nfspatsym) = b_{ij}\heavyside(\nfspatsym)$
with the Heaviside step function $\heavyside(\nfspatsym)$, resulting in 
\begin{equation}
	\label{eq:chcf}
	\sum\limits_{j=1}^{m} 
		\stieltjes{b^\star_{ij}}{\oinputsym_{j}}
	= \bm{b}_i\trans \oinput(t)
\end{equation}
for $i \in \{1,\dotsc,\diminput\}$ after evaluating the integrals.

\subsection{Generalised controller forms}%
\label{ssc:generalised_forms}

\subsubsection{With input derivatives}
\label{sss:gform_deriv}

To allow for input derivatives to occur in \eqref{eq:mimo_hcf_input},
we set $\bar b_{ij}(\nfspatsym) = \sum_{k=0}^{D_{ij}} b_{ij,k}\heavyside^{(k)}(\nfspatsym)$,
yielding
\begin{equation}
	\label{eq:ghcf_input_deriv}
	\sum\limits_{j=1}^{m}
	\stieltjes{b^\star_{ij}}{\oinputsym_{j}}
	= \sum\limits_{j=1}^{m}
	\sum_{k=0}^{D_{ij}} b_{ij,k} \oinputsym_{j}^{(k)}(t)
\end{equation}
where the $D_{i,j}$ denote the highest derivative order of 
$\oinputsym_j(t)$ in the equations of subsystem $i$.
Owing to the derivatives of the $u_j \in \leb[loc](\reals^+)$, 
we propose the term \textit{discontinuous controller form} for the
subtype \eqref{eq:ghcf_input_deriv}.

\subsubsection{With input predictions}
\label{sss:gform_pred}

For this case, we have
$\bar b_{ij} \in \boundvar([0, \bar\nfspatsym_{ij}])$,
where the $\bar\nfspatsym_{ij}$ denote the biggest predictions with which
the input $\oinputsym_j(t)$ appears in the subsystem $i$.
As such a system depends on future inputs, we propose the term 
\textit{non-causal controller form} for this case.

\subsubsection{With input predictions and delays}
\label{sss:gform_pred_and_delay}

For a more general case, we allow 
$b_{ij}^\star \in \boundvar([\underline{\nfspatsym\!}_{ij}, \bar\nfspatsym_{ij}])$,
where the $\underline{\nfspatsym\!}_{ij}$ denote the biggest delays 
of input $\oinputsym_j(t)$.
As this form only yields a sensible state if the system 
is defined for $t\in\reals$ instead of $\reals^+$,
we propose the term \textit{quasi controller form}.

\section{Computation of the controller form}%
\label{sec:computation}

This section contains the main result of the contribution, 
in which we present an algorithm to compute the controller form 
for a specific class of systems.

\subsection{System class}%
\label{sub:comp_class}

We discuss the linear first order hyperbolic \acs{mimo} system\footnote{%
	Note that the system from \Cref{ssc:benchmark} can easily be
	transformed into this more general form which also allows for asymmetric
	predictions and delays in contrast to the more compact second order
	formulation.
}
\begin{subequations}%
	\begin{align}
		\label{eq:orig_ode}
		\dot{\odestate}(t) &= \bm{F}\odestate(t) 
			+ \bm{B}\transstatefwd(0,t) \\
		\label{eq:orig_pde}
		\partial_t\pdestate(z,t) &= 
			\bm{\Lambda}\partial_z\pdestate(z,t),
			\quad z \in (0,1)\\
		\label{eq:orig_bc0}
		\transstatebwd(0,t) &= \bm{Q}_0\transstatefwd(0,t)
		+ \bm{C}\bm{\xi}(t)\\
		\label{eq:orig_bc1}
		\transstatefwd(1,t) &= \bm{Q}_1\transstatebwd(1,t) 
		+ \bm{u}(t),
	\end{align}%
	\label{eq:orig_sys}%
\end{subequations}%
where $\odestate(t) \in \reals^\dimode$, 
$\transstatefwd(\cdot,t) \in \leb([0,1],\reals^{\dimfwd})$
and 
$\transstatebwd(\cdot,t) \in \leb([0,1],\reals^\dimbwd)$
denote the boundary system's as well as the backward and 
forward transport systems' states.
This model is also known as a
\acs{pde}-\acs{ode} system in the backstepping literature, cf.~i.e.~\cite{DiMeglio2018}.

For \eqref{eq:orig_sys}, we assume decoupled \acsp{pde} with
$\bm{\Lambda} = \mathrm{diag}(
	\mat\Lambda^-, -\mat\Lambda^+
	)$
where $\mat\Lambda^\pm = 
\mathrm{diag}(\lambda_1^\pm,\dotsc,\lambda_{n^\pm}^\pm)$,
$\lambda_i^{\pm} \ge \lambda_{i+1}^{\pm} > 0$,
the pair $(\bm F, \bm B)$ controllable and
$\rank \bm Q_0 = \rank \bm Q_1 = \dimbwd$.

\subsection{Parametrisation}%
\label{sub:comp_param}

As \eqref{eq:orig_ode} is controllable, a
flatness-based parametrisation 
\begin{subequations}
	\begin{align}
		\label{eq:ode_state_param}
		\odestate(t) &= \mat N(\diffsym) \flatout(t)\\
		\label{eq:ode_input_param}
		\transstatefwd(0,t) &= \mat D(\diffsym) \flatout(t)
	\end{align}%
	\label{eq:ode_param}%
\end{subequations}
can always be given. 
	Furthermore, if the components of the flat output $\flatout(t)$ are chosen
	as the outputs of each block's integrator chain of the systems' \acs{mimo}
	controller form \eqref{eq:mimo_cf}, the parametrisation matrix $\mat D(\diffsym)$
	is column reduced and $\cdeg_{i} \mat N(\diffsym) < \cdeg_{i}\mat D(\diffsym),\;i
	\in \{1,\dotsc,m\}$ holds for the column degrees.

Regarding the \acs{pde} subsystem, with $\odestate(t)$ as well as
$\transstatefwd(0,t)$ known, \eqref{eq:orig_bc0} yields
$\transstatebwd(0,t) 
= \mat\boundarymat(\diffsym)\flatout(t)
= \left(
	\mat Q_0 \mat D(\diffsym) 
	+ \mat C \mat N(\diffsym)
\right)\flatout(t)$ 
and therefore the boundary values required to solve
\eqref{eq:orig_pde} up to $z=1$. This yields
\begin{align}
	\label{eq:pde_fwd_state_param}
	\transstatefwd(1,t)
	&= \predmat(\shiftsym, t) \transstatefwd(0,t)
	\\
	\label{eq:pde_bwd_state_param}
	\transstatebwd(1,t)
	&= \delaymat(\shiftsym, \diffsym) \transstatebwd(0,t)
\end{align}
with the prediction matrix
$\predmat(\shiftsym) = \diag(
\shiftsym^{\tau_{1}^-},
\dotsc,
\shiftsym^{\tau_{\dimfwd}^-}
)$
and the delay matrix
$\delaymat(\shiftsym) = \diag(
\shiftsym^{-\tau_{1}^+},
\dotsc,
\shiftsym^{-\tau_{\dimbwd}^+}
)$
where $\tau_i^{\pm} = \frac{1}{\lambda_i^{\pm}}$. 
Finally,
combining \Cref{eq:ode_state_param,eq:ode_input_param,eq:pde_fwd_state_param,eq:pde_bwd_state_param}
with \eqref{eq:orig_bc1} yields the matrix
\begin{equation}
	\label{eq:transmat_def}
	\transmat(\shiftsym,\diffsym) 
    \!=\! \predmat(\shiftsym)\mat D(\diffsym) 
		- \mat Q_1\delaymat(\shiftsym)
		\boundarymat(\diffsym)
\end{equation}
for the parametrisation
\begin{equation}
	\oinput(t) = \transmat(\shiftsym,\diffsym)\flatout(t)
	\label{eq:io_form}
\end{equation}
of the system input by the flat output of the \acs{ode}. 

The structure of \eqref{eq:transmat_def} now shows that
$\mat \predmat(\shiftsym)$ and $\delaymat(\shiftsym)$ introduce \emph{all}
predictions and delays of the system into \emph{every} column of
$\transmat(\shiftsym,\diffsym)$ which is the root cause of the
problem with the motivational example's parametrisation
\eqref{eq:zuazua_param_u}.
Next, we will employ quasipolynomials to resolve the issue.

\subsection{Quasipolynomials}%

With the set $T = \{
\tau_1^-,\dotsc,\tau_\dimfwd^-,
\tau_1^+,\dotsc,\tau_\dimbwd^+,
\}
$
of all unique predictions and shifts in the system, by interpreting the members
of $\ggen = \{\sigma^{\tau},\sigma^{-\tau} | \tau \in T\}$ and $\diffsym$ as
independent generators, we conceive the entries $\transmatsym_{i,j}(\shiftsym,\diffsym)$
of $\transmat(\shiftsym,\diffsym)$ to be from $\realgppolys =
\mathbb{R}[\ggen,\diffsym]$, the ring of quasipolynomials in
$\shiftsym$ with real exponents, whose coefficients are polynomials in
$\diffsym$ with coefficients from $\reals$. 
Furthermore, for the elements
\begin{equation}
	r(\shiftsym,\diffsym) = \sum_{i=1}^n a_i(\diffsym)\shiftsym^{\alpha_i}
	\in \realgppolys
	\label{eq:gppoly}
\end{equation}
with $a_i(\diffsym) \in \realpolys$, the ordering $\alpha_i > \alpha_{i+1}$, 
for $i \in \{1,\dotsc,n-1\}$
is assumed.
To ease working on the polynomials \eqref{eq:gppoly}, we define the pseudo degrees
\begin{equation*}
	\deg_{\shiftsym}^+ r(\shiftsym,\diffsym) 
		\coloneq \max_{i \in \{1,\dotsc,n\}} \alpha_i, 
	\;
	\deg_{\shiftsym}^- r(\shiftsym,\diffsym) 
		\coloneq \min_{i \in \{1,\dotsc,n\}} \alpha_i 
\end{equation*}
as well as the degree\footnote{%
	This choice is motivated by the case of Laurent polynomials,
	which possess positive and negative integer exponents and
	form an euclidean ring for the given degree definition,
	cf.~\cite{GluesingLuerssen1997,Icart2012}.
}
\begin{equation}
	\deg_{\shiftsym} r(\shiftsym,\diffsym) \coloneq 
		\deg_{\shiftsym}^+ r(\shiftsym,\diffsym) - \deg_{\shiftsym}^- r(\shiftsym,\diffsym) 
		,
	\label{eq:degree_def}
\end{equation}
yielding the upper and lower boundary as well as the diameter of the interval,
respectively.
Next, we discuss how these intervals can be reduced in a systematic way.

\subsection{A shift interval reduction algorithm}%
\label{ssc:reduction_strategy}

After stating the requirements on the input parametrisation in terms
of degree conditions for quasipolynomials, this section
shows how the shifts in single entries of $\transmat(\shiftsym,\diffsym)$
can be reduced
before the element-wise operations are condensed into an algorithm to reduce
the shifts in the whole matrix.

\subsubsection{Requirements}%
\label{sss:reduction_requirements}

To obtain the controller form \eqref{eq:mimo_hcf},
the input parametrisation \eqref{eq:io_form} must be solved for the highest
derivatives of the flat output.
Furthermore, to introduce a \gls{hcf} state in a direct fashion,
these must occur together with the largest predictions as well as
largest delays and the total system shift
$ \totalshift = \sum_{i=1}^{\dimfwd} \tau^-_i 
		+ \sum_{i=1}^{\dimbwd} \tau^+_i
$
must be preserved.
In the algebraic setting, this corresponds to the conditions
\begin{subequations}
	\begin{align}
		\label{eq:req_lccm_cond}%
		\lccm_{\diffsym}(\lccm^+_{\shiftsym} \transmat(\shiftsym,\diffsym))
		&= \lccm^+_\shiftsym(\lccm_{\diffsym} \transmat(\shiftsym,\diffsym))
		\\
		\label{eq:req_state_dim}
		\sum_{i=1}^m \cdeg_{\sigma,i} \transmat(\shiftsym,\diffsym)
		&= \totalshift
	\end{align}%
	\label{eq:req_alg_conds}%
\end{subequations}
where \eqref{eq:req_lccm_cond} must have full rank. 
In this setting, the operator $\lccm^+_{\shiftsym}$ denotes the \gls{lccm}
w.r.t.~$\shiftsym$ while using the pseudo degree $\deg^+_\shiftsym$ to identify
the leading coefficients. 
Thus, \eqref{eq:req_lccm_cond} is trivially satisfied if
the degree conditions
\begin{subequations}%
	\begin{align}%
		\label{eq:req_pred_ordering}%
		\deg^+_\shiftsym {\transmatsym}_{ij}(\shiftsym,\diffsym)
		&\le 
		\deg^+_\shiftsym {\transmatsym}_{jj}(\shiftsym,\diffsym)
		\\
		\label{eq:req_derivative_ordering}
		\deg_{\diffsym} {\transmatsym}_{ij}(\shiftsym,\diffsym)
		&\le 
		\deg_{\diffsym} {\transmatsym}_{jj}(\shiftsym,\diffsym)
	\end{align}%
	\label{eq:req_degree_cond}%
\end{subequations}
hold for $i,j \in \{1,\dotsc,\diminput\}$
as all \glspl{lccm} then are diagonal.

\subsubsection{Preliminary transformations}%
\label{sss:preliminary_transforms}

\begin{lem}
	\label[lem]{lem:sorting}
	By transforming the system input and permuting the
	components of the flat output, a matrix $\mtransmat(\shiftsym,\diffsym)$
	can be obtained, such that 
	a) the leading row delays 
	$d_i = \rdeg_{\shiftsym,i}^- \mtransmat(\shiftsym,\diffsym)$
	coincide, 
	b) the leading row degrees 
	$r_i = \rdeg_{\shiftsym,i} \mtransmat(\shiftsym,\diffsym)$
	are ascending and 
	c) the highest degrees w.r.t.~$\diffsym$ in each column are
	found on the main diagonal. 
	Thus, $\mtransmat(\shiftsym,\diffsym)$ satisfies
	\eqref{eq:req_derivative_ordering}
	as well as 
	\begin{subequations}
		\begin{align}
			\label{eq:equal_leading_row_delays}
			d_1 &= d_2 = \hdots = d_m,\\
			\label{eq:shift_ordering}
			r_i &\le r_{i+1},\; i\in\{1,\dotsc,m-1\}.
		\end{align}%
		\label{eq:transmat_ordering}%
	\end{subequations}%
\end{lem}
\begin{proof}
	Due to the structure of \eqref{eq:transmat_def}, 
	a fully populated $\mat Q_1$ introduces
	larger delays from lower rows into rows further up.
	Therefore, we use a lower left triangular row transform 
	(obtainable e.g.~from the LU decomposition of $\mat Q_1$)
	to bring $\mat Q_1$ into upper left
	triangular form.
	This eliminates superfluous delays without increasing the
	predictions as $\predmat(\shiftsym)$ is diagonal.
	Next, the leading row delays are harmonised at their mean $\bar d =
	\sum_{i=1}^m d_i$ by left multiplying with the unimodular 
	$\mat S(\shiftsym) 
	= \diag(\shiftsym^{\bar d - d_1},\dotsc,\shiftsym^{\bar d - d_m})$.
	Finally, \eqref{eq:shift_ordering} is obtained by permuting the input
	components.

	Furthermore, from the parametrisation \eqref{eq:ode_param} we have that
	$\cdeg_i \mat N(\diffsym) < \cdeg_i \mat D(\diffsym)$ 
	(which is why $\mat C$ has no effect regarding the degrees in
	\eqref{eq:transmat_def} w.r.t.~${\diffsym}$)
	and also that $\mat D(\diffsym)$ is column reduced.
	Since this property is preserved under transformations over $\reals$
	and $\mat Q_0, \mat Q_1$ have full rank, we obtain that
	$\transmat(\shiftsym,\diffsym)$ is column reduced w.r.t.~$\diffsym$.
	Therefore, \eqref{eq:req_derivative_ordering} is achieved by permuting
	the components of $\flatout(t)$.
\end{proof}

In $\mtransmat(\shiftsym,\diffsym)$ from \cref{lem:sorting}, the
diagonal elements $\transmatsym_{ii}(\shiftsym,\diffsym)$ of each column can be
used to reduce the (pseudo) degrees of the entries below them, such that all
\begin{equation}
	\rtransmatsym_{ij}(\shiftsym,\diffsym) = 
	\mtransmatsym_{ij}(\shiftsym,\diffsym) - q_{ij}(\shiftsym,\diffsym)\mtransmatsym_{jj}(\shiftsym,\diffsym)
	\label{eq:shift_elimination}
\end{equation}
for $i,j\in\{1,\dotsc,m\}, i > j$ satisfy \eqref{eq:req_degree_cond}.
Next, we discuss how to compute the required 
$q_{ij}(\shiftsym,\diffsym)$ 
to perform such a reduction and develop a modified \gls{qpld} algorithm.

\subsubsection{Quasipolynomial long division on $\realgratpolys$}%
\label{sss:tentative_recution}

Treating all entries in \eqref{eq:shift_elimination} as elements of
$\realgratpolys = \mathbb{R}(\diffsym)\left[\ggen\right]$ which allows for
rational expressions in $\diffsym$ as coefficients enables 
us to develop a tentative polynomial long division scheme in $\shiftsym$:

\begin{lem}
	\label[lem]{lem:gdiv}
	Let $x(\shiftsym,\diffsym),y(\shiftsym,\diffsym) \in \realgratpolys,\;
	y(\shiftsym,\diffsym)\neq 0$ be quasipolynomials in $\shiftsym$ with
	\emph{rational coefficients} in $\diffsym$ and 
	$\deg_\shiftsym x(\shiftsym) \ge \deg_\shiftsym y(\shiftsym)$
	as well as $\deg_{\diffsym} x(\shiftsym) \le \deg_{\diffsym} y(\shiftsym)$. 
	Then there exist $q(\shiftsym,\diffsym),r(\shiftsym,\diffsym) \in \realgratpolys$ such
	that
	\begin{equation}
		x(\shiftsym,\diffsym) = q(\shiftsym,\diffsym) y(\shiftsym,\diffsym) + r(\shiftsym,s)
		\label{eq:gdiv}
	\end{equation}
	with $\deg_\shiftsym r(\shiftsym,\diffsym) < \deg_\shiftsym y(\shiftsym,\diffsym)$,
	$\deg^+_\shiftsym r(\shiftsym,\diffsym) < \deg^+_\shiftsym y(\shiftsym,\diffsym)$
	and $\deg^-_\shiftsym r(\shiftsym,\diffsym) > \deg^-_\shiftsym y(\shiftsym,\diffsym)$.
\end{lem}
\begin{proof}
	As the following steps are valid for arbitrary coefficients, the explicit
	mention of $\diffsym$ will be omitted.
	In the general case we have 
	$\deg_\shiftsym x(\shiftsym) > \deg_\shiftsym y(\shiftsym)$
	as well as 
	$\deg^+_\shiftsym x(\shiftsym) > \deg^+_\shiftsym y(\shiftsym)$,
	$\deg^-_\shiftsym x(\shiftsym) < \deg^-_\shiftsym y(\shiftsym)$
	or both.
	Since the exponents of
	$x(\shiftsym,\diffsym) = \sum_{i=1}^u a_i(\diffsym) \shiftsym^{\alpha_i}$,
	$y(\shiftsym,\diffsym) = \sum_{i=1}^v b_i(\diffsym) \shiftsym^{\beta_i}$
	are ordered, we use
	\begin{equation}
		q(\shiftsym) =
		\begin{cases}
			\frac{a_1}{b_1}\shiftsym^{\alpha_1-\beta_1}
				, &\text{ if }
				\deg_\shiftsym^+ x(\shiftsym) > \deg^+_\shiftsym y(\shiftsym)\\
			\frac{a_u}{b_v}\shiftsym^{\alpha_u-\beta_v}
				, &\text{ if }
				\deg^-_\shiftsym x(\shiftsym) < \deg^-_\shiftsym y(\shiftsym),
		\end{cases}
		\label{eq:quotient_choice}
	\end{equation}
	where in the case that both are true, the upper branch is taken.
	Assuming the upper branch was taken, this yields 
	\begin{equation*}
	\begin{split}
		r(\shiftsym) 
		= x(\shiftsym) - q(\shiftsym) y(\shiftsym) 	
		= \sum\limits_{i=2}^{u} a_i \shiftsym^{\alpha_i}
		\!-\!\frac{a_1}{b_1}\shiftsym^{\alpha_1-\beta_1}\!
			\sum\limits_{i=2}^{v} b_i \shiftsym^{\beta_i}
	\end{split}
	\end{equation*}
	for which we find that
	\begin{equation*}
		\deg^+_\shiftsym r(\shiftsym) = \max \left\{
			\alpha_2, 
			\alpha_1 - (\beta_1 - \beta_2)
		\right\}
		<
		\deg^+_\shiftsym x(\shiftsym)
	\end{equation*}
	as well as 
	$
			\deg^-_\shiftsym r(\shiftsym) = \min \{
				\alpha_{u}, 
				\alpha_{1} - (\beta_1 - \beta_v)
		\}
	$
	in which we identify
	$\deg_\shiftsym y(\shiftsym) = \beta_1 - \beta_v$
	and use 
	$\alpha_1 = \deg_\shiftsym x(\shiftsym) + \alpha_u$
	as well as $\kappa = \deg_\shiftsym x(\shiftsym) - \deg_\shiftsym
	y(\shiftsym) \ge 0$ to obtain
	\begin{equation*}
		\begin{split}
			\deg^-_\shiftsym r(\shiftsym) = \min \{&
			\alpha_u, 
			\alpha_u + \kappa
		\}
		= \alpha_u
		= \deg^-_\shiftsym x(\shiftsym).
		\end{split}
	\end{equation*}
	This shows that the predictions in $x(\shiftsym)$ can be reduced without
	affecting the delays.
	Regarding the overall degree, we have
	$\deg_\shiftsym r(\shiftsym) = \max \{
		\alpha_2 - \alpha_u,
		\alpha_1 - \alpha_u - (\beta_1 - \beta_2)  	
	\}$
	which yields
	$\deg_\shiftsym r(\shiftsym) < \deg_\shiftsym x(\shiftsym)$.
	Hence, $x(\shiftsym)$ can be reduced in a manner such that the iteration
	\begin{equation*}
		r_{k}(\shiftsym) = r_{k-1}(\shiftsym) - q_k(\shiftsym) y(\shiftsym) 	
	\end{equation*}
	with
	$r_0(\shiftsym) = x(\shiftsym)$
	and
	$q_k(\shiftsym) =
		\frac{c_{k,1}}{b_1}\shiftsym^{\gamma_{k,1}-\beta_1}
	$
	by
	$r_k(\shiftsym) = \sum_{i=1}^{w_k} c_{k,i}\shiftsym^{\gamma_{k,i}}$
	will for some $k^+$ terminate with
	$\deg^+_\shiftsym r_{k^+}(\shiftsym) < \deg^+_\shiftsym y(\shiftsym)$
	since $x(\shiftsym)$ only has a finite number of summands.

	If required, the second branch of \eqref{eq:quotient_choice} is then taken
	and similar computations yield that restarting the iteration with 
	$r_0(\shiftsym) = r_{k^+}(\shiftsym)$
	and choosing
	$q_k(\shiftsym) =
		\frac{c_{k,1}}{b_v}\shiftsym^{\gamma_{k,1}-\beta_v}
	$
	will yield
	$\deg^-_\shiftsym r_{k^-}(\shiftsym) > \deg^-_\shiftsym y(\shiftsym)$
	after $k^-$ more steps.
	Summing up all $q_k(\shiftsym)$ from both iterations then yields the
	resulting \enquote{quotient} $q(\shiftsym)$
	with denominator $d = b_1^{k^+} b_v^{k^-}$.
\end{proof}

\subsubsection{Quasipolynomial long division on $\realgwholepolys$}%
\label{sss:actual_recution}

With \Cref{lem:gdiv}, the shift degrees have been taken care of. 
However, the coefficients of $q(\shiftsym,\diffsym)$ and
$r(\shiftsym,\diffsym)$ still are rational expressions in $\diffsym$.
Since, in the analytic setting, their denominators can be interpreted as
functions that their numerator polynomials are applied to,
the results are piecewise continuous functions which do not necessarily have
compact support.
In other words: They generally do not correspond to differential or
(distributed) delay operators with finite shift amplitude from $\entirefuncs$
and can therefore not be used to reduce the shifts as they would introduce
indefinite integrals.

Nevertheless, thanks to the Paley-Wiener theorem \cite[Thm.
7.3.1]{Hoermander1983}, if their Laplace transforms are entire functions, they
indeed are such operators.
For example, take $S_z =
\frac{1}{2}\frac{\shiftsym^z-\shiftsym^{-z}}{\diffsym} \in \realgratpolys$ from
the introduction, whose Laplace transform $\mathcal{L}\{S_z\}(s) =
\frac{1}{2s}\left(e^{sz}-e^{-sz}\right)$ appears to possess a pole at $s=0$.
However, since its numerator introduces a zero at the same location,
this pole can be lifted rendering $\mathcal{L}\{S_z\}(s)$ an entire
function and thus yielding $S_z \in \entirefuncs$ 
(which was already clear from \eqref{eq:zuazua_param_x}).
Thus, the idea is to introduce numerator zeros to $q(\shiftsym,\diffsym)$ in
form of a correction term $p(\diffsym)$ such that a modified
$q^\star(\shiftsym,\diffsym) = q(\shiftsym,\diffsym) + p(\diffsym)$ belongs to
$\realgwholepolys = \realgratpolys \cap \entirefuncs$ and is therefore
admissible.

\begin{lem}
	\label[lem]{lem:gdiv_whole}
	Let $x(\shiftsym,\diffsym),y(\shiftsym,\diffsym) \in
	\realgentirepolys,\;y(\shiftsym,\diffsym) \neq 0$ be operators from
	$\entirefuncs$, written as quasipolynomials in $\shiftsym$ with
	rational coefficients in $\diffsym$ and
	$\deg_\shiftsym x(\shiftsym) \ge \deg_\shiftsym y(\shiftsym)$, 
	as well as $\deg_{\diffsym} x(\shiftsym) \le \deg_{\diffsym} y(\shiftsym)$. 
	Then there exist 
	$q^\star(\shiftsym,\diffsym),r^\star(\shiftsym,\diffsym) \in \realgwholepolys 
	$ such that
	\begin{equation}
		x(\shiftsym,\diffsym) = q^\star(\shiftsym,\diffsym) y(\shiftsym,\diffsym) + r^\star(\shiftsym,s)
		\label{eq:gdiv_whole}
	\end{equation}
	with $\deg_\shiftsym r^\star(\shiftsym,\diffsym) = \deg_\shiftsym y(\shiftsym,\diffsym)$,
	$\deg^+_\shiftsym r^\star(\shiftsym,\diffsym) = \deg^+_\shiftsym y(\shiftsym,\diffsym)$
	and $\deg^-_\shiftsym r^\star(\shiftsym,\diffsym) = \deg^-_\shiftsym y(\shiftsym,\diffsym)$.
\end{lem}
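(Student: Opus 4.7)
The plan is to bootstrap from \Cref{lem:gdiv}: first obtain the rational division $x = qy + r$, then repair the quotient to have an entire Laplace transform, accepting that the remainder's shift range inflates to that of $y$. Applying \Cref{lem:gdiv} gives $q, r \in \realgratpolys$ with $x = qy + r$ and $\deg_\shiftsym r < \deg_\shiftsym y$, whose coefficients are rational in $\diffsym$ with common denominator $d(\diffsym) = b_1(\diffsym)^{k^+} b_m(\diffsym)^{k^-}$, where $b_1, b_m$ are the $\shiftsym$-leading and trailing coefficients of $y$. Generically $q \notin \realgwholepolys$, since its Laplace transform has poles at the roots of $d$. The crucial simplification is that only $q^\star$ needs to be made entire; the remainder $r^\star \coloneq x - q^\star y$ is then automatically in $\realgwholepolys$ as the difference of entire objects.

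To construct $q^\star$, I would correct $q$ pole by pole via Hermite interpolation. At a root $\diffsym_0$ of $d$ with multiplicity $\mu$, the Laurent principal part of $q$ reads $\sum_{j=1}^{\mu} \phi_j(\shiftsym)/(\diffsym-\diffsym_0)^j$ with $\phi_j \in \reals[\ggen]$. For each $j$, expanding $\phi_j(e^s)$ in $(s-\diffsym_0)$ up to order $j-1$ gives a polynomial $P_j(\diffsym)$ in $\diffsym$ with constant coefficients in $\shiftsym$; subtracting $P_j(\diffsym)$ from $\phi_j(\shiftsym)$ leaves a numerator with a zero of order at least $j$ at $\diffsym_0$, so that $(\phi_j(\shiftsym)-P_j(\diffsym))/(\diffsym-\diffsym_0)^j$ becomes entire there. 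Summing these adjustments over all roots of $d$ yields $q^\star$, and the correction $\Delta q \coloneq q^\star - q$ has $\shiftsym$-support $\{0\}$, since the $P_j$ contribute only $\shiftsym$-constants.

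The shift degrees of $r^\star = r - \Delta q\,y$ then fall into place: because $\Delta q$'s $\shiftsym$-support is $\{0\}$, the product $\Delta q\,y$ has shift support exactly $[\deg^-_\shiftsym y, \deg^+_\shiftsym y]$, with generically non-vanishing extremal coefficients; combined with $r$'s strictly smaller shift support this yields $r^\star$ whose $\shiftsym$-degrees coincide with those of $y$. The Paley-Wiener theorem \cite[Thm.~7.3.1]{Hoermander1983} then certifies $q^\star, r^\star \in \realgwholepolys$. The main obstacle will be the Hermite-interpolation bookkeeping and the verification that, once all local corrections are summed, the resulting $q^\star$ is indeed free of every pole of $d$ and has sufficiently controlled growth for Paley-Wiener to apply; this places the core technical difficulty in the delicate interplay between rationality in $\diffsym$ and entireness.
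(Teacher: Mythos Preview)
Your plan matches the paper's proof almost exactly: both start from \Cref{lem:gdiv}, then add to $q$ a correction that is rational in $\diffsym$ alone (hence $\shiftsym$-free) so that the sum becomes entire, and finally read off the $\shiftsym$-degrees of $r^\star=r-(q^\star-q)\,y$ from those of $y$ via Paley--Wiener. The only difference is how the entire correction is built. The paper writes $q=\tilde q/d$ with $d(\diffsym)=\prod_{k=1}^{D}(\diffsym-s_k)$ and runs the Newton divided-difference iteration
\[
\hat q_k(s)=\frac{\hat q_{k-1}(s)-\hat q_{k-1}(s_k)}{s-s_k},\qquad \hat q_0=\mathcal{L}\{\tilde q\},
\]
so that after $D$ steps $\hat q_D$ is entire and $q^\star-q=-N(\diffsym)/d(\diffsym)$ for the interpolating polynomial $N$; this handles repeated roots automatically and needs no partial-fraction bookkeeping. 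Your Hermite scheme achieves the same effect by correcting each principal-part term $\phi_j(\shiftsym)/(\diffsym-s_0)^j$ locally; note that your stated worry about the summed corrections reintroducing poles is not an issue, since each correction $-P_{k,j}(\diffsym)/(\diffsym-s_k)^j$ is holomorphic away from $s_k$, so the local repairs do not interfere. Either route yields the same structural conclusion $q^\star=q+p(\diffsym)$ with $p\in\realrationals$, which is all that is needed for the degree statements.
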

\begin{proof}
	This proof is heavily inspired by \cite{Brethe1996,GluesingLuerssen1997}
	and only adapted to the current setup.
	First, we obtain $q(\shiftsym,\diffsym),r(\shiftsym,\diffsym) \in \realgratpolys$ by
	means of \Cref{lem:gdiv} and extract its normalised common denominator 
	$d(\diffsym) \in \realpolys$ such that 
	$q(\shiftsym,\diffsym) = \frac{\tilde q(\shiftsym,\diffsym)}{d(\diffsym)}$
	with $\tilde q(\shiftsym,\diffsym) \in \realgppolys$.
	Next, the factorisation $d(\diffsym) = \prod_{k=1}^{D} (\diffsym - s_k)$ with the roots
	$s_k$ and $D = \deg d(\diffsym)$ are used together with 
	$\hat q_0(s) = \mathcal{L}\{\tilde q(\shiftsym,\diffsym)\}$
	in the iteration 
	\begin{equation}
		\label{eq:discont_iteration}
		\hat q_{k}(s) = 
		\frac{\hat q_{k-1}(\lapsym) 
			- \hat q_{k-1}(\lapsym_k)
		}{\lapsym-\lapsym_k},
	\end{equation}
	which introduces a numerator zero for every pole from
	$q(\shiftsym,\diffsym)$ it adds.
	Thus, $q_{D}(\lapsym)$ obtained after $D$ iterations
	is an entire function on $\mathbb{C}$ and therefore
	$q^\star(\shiftsym,\diffsym) 
	= \mathcal{L}^{-1}\{q_{D}(\lapsym)\} \in \realgwholepolys$
	by \cite[Thm. 7.3.1]{Hoermander1983}.

	Now, due to the structure of \eqref{eq:discont_iteration}
	we can justify the separation
	$q^\star(\shiftsym,\diffsym) = q(\shiftsym,\diffsym) + p(\diffsym)$
	with $p(\diffsym) \in \realrationals$.
	Therefore, we obtain the degree conditions from \Cref{lem:gdiv}
	with equal signs for the new remainder $r^\star(\shiftsym,\diffsym)$.
\end{proof}

\subsubsection{Algorithm}%
\label{sss:reduction_algorithm}

Equipped with a suitable way to reduce the degrees of its elements by 
means of the \gls{qpld} from \Cref{lem:gdiv_whole},
we now reduce the degrees in the entire matrix $\transmat(\shiftsym,\diffsym)$
by applying a series of row transformations.

\begin{algorithm}
	\caption{Shift reduction in $\transmat(\shiftsym,\diffsym)$}
	\label[alg]{alg:shift_red}
	\begin{algorithmic}[1]
		\Function{build\_transform}{$\mat X,j$}
			\State $\mat L_c \gets \mat I_{\dimfwd}$
			\For{$i \in \{j+1,\dotsc,\dimfwd\}$}
			\State{// From \Cref{lem:gdiv_whole}:}
			\State $q^\star_{i,j}
			\gets$ \Call{qpld}{$x=\mat X[i,j],y=\mat X[j,j]$}
			\State $\mat L_c[i,j] \gets -q^\star_{i,j}$
			\EndFor
			\State\Return $\mat L_c$
		\EndFunction
		\State
		\Procedure{reduce\_shifts}{\transmat}
			\State $\bar\transmat \gets \transmat,\;
				\mat L \gets \mat I_{\dimfwd}$
				\While{$\sum_{i=0}^{\dimfwd}\cdeg_i \rtransmat > \totalshift$}
				\If{$\deg_\shiftsym \rtransmatsym_{ii} 
					\neq\deg_\shiftsym \rtransmatsym_{jj}
					 \;\forall i,j \in \{1,\dotsc,\dimfwd\}
					$}
					\For{$j \in \{1,\dotsc,\dimfwd-1\}$}
						\State $\mat L_c \gets$ 
							\Call{build\_transform}{$\bar\transmat, j$}
						\State $\bar\transmat \gets \mat L_c \transmat$,
							$\mat L \gets \mat L_c \mat L$
					\EndFor
				\Else
					\State $\mat L_c \gets$ 
						\Call{mqpld}{$\rtransmat$}
					\State $\bar\transmat \gets \mat L_c \transmat$,
						$\mat L \gets \mat L_c \mat L$
				\EndIf
			\EndWhile
			\State\Return $\bar\transmat$,$\mat L$
		\EndProcedure
	\end{algorithmic}
\end{algorithm}
\begin{lem}
	The matrix $\bar\transmat(\shiftsym,\diffsym) 
	= \mat L(\shiftsym,\diffsym) \transmat(\shiftsym,\diffsym)$
	obtained by \Cref{alg:shift_red} satisfies the requirements
	\eqref{eq:req_alg_conds}.
\end{lem}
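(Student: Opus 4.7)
The goal is to show that $\bar\transmat(\shiftsym,\diffsym)$ fulfils the degree conditions \eqref{eq:req_degree_cond}, which by the discussion following that display makes every involved \gls{lccm} diagonal and hence renders \eqref{eq:req_lccm_cond} trivial. As a preparatory step, \Cref{lem:sorting} is invoked so that the starting matrix already satisfies the ordering \eqref{eq:transmat_ordering}: the diagonal carries the maximal $\diffsym$-degree within each row and the $\shiftsym$-degrees increase down every column.

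A single execution of \textsc{compute\_transform} at column $j$ produces row operations of the form $\text{row }i \gets \text{row }i - q^\star_{i,j}\,\text{row }j$ for $i>j$, with $q^\star_{i,j}$ supplied by \Cref{lem:gdiv_whole}. The hypotheses of that lemma must first be verified: $\deg_\shiftsym \bar h_{i,j}\ge\deg_\shiftsym \bar h_{j,j}$ is immediate from \eqref{eq:shift_ordering}, whereas $\deg_\diffsym \bar h_{i,j}\le\deg_\diffsym \bar h_{j,j}$ is inherited from the column reducedness of $\transmat$ combined with \eqref{eq:derivative_ordering}. The conclusion of \Cref{lem:gdiv_whole} then pins the $\shiftsym$-interval of the modified entry $\bar h_{i,j}$ to that of the pivot, so that after column $j$ has been processed \eqref{eq:req_pred_ordering} and \eqref{eq:req_shift_ordering} hold in that column.

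The main obstacle, and the reason the outer loop is needed, is the spill-over of each row operation onto the other columns: the operation also changes $\bar h_{i,k}$ by $q^\star_{i,j}\bar h_{j,k}$ for $k\neq j$, which could re-violate the required conditions in previously treated columns. The plan is to establish a monotonicity invariant---anchored on the identity $\deg_\shiftsym q^\star_{i,j}=\deg_\shiftsym \bar h_{i,j}-\deg_\shiftsym \bar h_{j,j}$ (a direct consequence of \eqref{eq:quotient_choice}) together with the column ordering \eqref{eq:shift_ordering}---showing that a suitably defined excess measure, e.g.\ the sum of $(\deg_\shiftsym\bar h_{i,k}-\deg_\shiftsym\bar h_{k,k})_+$ over all off-diagonal entries and the analogous quantities for $\deg^+_\shiftsym$ and $\deg_\diffsym$, cannot increase across inner sweeps and strictly decreases while any violation persists.

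Since only finitely many $\shiftsym$-exponents and derivative degrees can appear in $\transmat$, this measure lives in a discrete set and must therefore reach zero after at most $\dimfwd-1$ outer iterations, yielding a $\bar\transmat$ that meets \eqref{eq:req_degree_cond} and thus \eqref{eq:req_lccm_cond}. The subtle part of the argument will be verifying that the spill-over term $q^\star_{i,j}\bar h_{j,k}$ genuinely respects the column ordering and thereby cannot re-introduce excess that the induction has already removed in earlier columns---this is where the combination of \Cref{lem:sorting} and the monotone-degree property of \Cref{lem:gdiv_whole} does the real work.
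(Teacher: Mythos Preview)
Your plan mirrors the paper's own argument closely: both acknowledge that each column reduction spills over into previously treated columns and claim that the fixed number of outer sweeps suffices to absorb this. The paper's proof is in fact only a two-sentence sketch---it notes that spill-over lands ``to its left in the rows below its diagonal element'' and that the procedure ``continues to the next row, until no more new shifts are introduced as the last row is reached''---so your more explicit decomposition is entirely appropriate.

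There is, however, a genuine gap in your termination step. You argue that the excess measure ``lives in a discrete set'' because only finitely many $\shiftsym$-exponents occur in $\transmat$. That is true of the \emph{initial} matrix, but every application of \Cref{lem:gdiv_whole} manufactures new exponents as differences and sums of the old ones, and the additive group these generate is in general dense in $\reals$ (in the paper's own benchmark it is $\mathbb{Z}\pi+\mathbb{Z}\cdot 10$). Monotone decrease of a real-valued measure therefore does not by itself yield termination in finitely many steps, and certainly not in exactly $\dimfwd-1$ sweeps; the inference from ``strictly decreases while any violation persists'' to ``reaches zero after at most $\dimfwd-1$ outer iterations'' is a non-sequitur. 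What the paper is gesturing at instead is an inductive invariant keyed to the outer loop counter---roughly, that once a row has been swept through all columns to its left, the quotients it produces in later sweeps have $\shiftsym$-degree zero and hence create no fresh excess further down. That row-by-row stabilisation is the content of the paper's phrase ``continues to the next row'', and it is precisely the piece you correctly flag as ``the subtle part'' but leave unresolved; the discreteness detour does not substitute for it.
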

\begin{proof}
	In the while loop, for the case of unique pivot degrees, the degrees in
	each column are corrected separately from left to right.
	Therein, each correction step can introduce new shifts to the rows below its
	pivot element.
	In the case of coinciding degrees of pivot elements, a \gls{qpldm} is
	performed which reduces the degrees in several columns at once but is not
	discussed here due to space limitations.
	Then the loop repeats, thus, effectively reducing the delays in
	each column by the (nonzero) degree differences of the pivot elements until the
	desired state dimension is reached.
\end{proof}

\subsection{Deriving a hyperbolic MIMO controller form}%

Applying \Cref{alg:shift_red} to $\transmat(\shiftsym,\diffsym)$ from \eqref{eq:io_form} we obtain
\begin{equation}
	\mat L^{-1}(\shiftsym,\diffsym) \oinput(t)
	= \bar\transmat(\shiftsym,\diffsym) \flatout(t)
	\label{eq:io_form_transformed}
\end{equation}
with $\hat{\transmat} = \lccm_{\diffsym,\shiftsym} \bar\transmat(\shiftsym,\diffsym)$
having full rank.
Thus, by means of the separation 
$\rtransmat(\shiftsym,\diffsym) 
= \ltransmat
	\mat K(\shiftsym,\diffsym)
	+ \ttransmat(\shiftsym,\diffsym)
$
where
$\mat K(\shiftsym,\diffsym) = \diag(
\diffsym^{\kron_1}\shiftsym^{\predsym_{1}},
\dotsc,
\diffsym^{\kron_m}\shiftsym^{\predsym_{m}}
)$
with the highest derivative and prediction orders 
$\kron_i = \deg_{\diffsym} \bar \transmatsym_{ii}(\shiftsym,\diffsym)$
and 
$\predsym_{i} = \deg^+_\shiftsym \bar \transmatsym_{ii}(\shiftsym,\diffsym)$
we obtain the explicit \gls{fde} system
\begin{equation}
	\mat K(\shiftsym,\diffsym) \flatout(t) 
	\!=\!
	\hat{\transmat}^{-1}\!\!
	\left(
		\mat L^{-1}(\shiftsym,\diffsym) \oinput(t)\!
		- \tilde{\transmat}(\shiftsym,\diffsym) \flatout(t)
		\!
	\right)
	\label{eq:dde_system}
\end{equation}
which reads
\begin{equation}
	\flatsym_i^{(\kron_i)}(t+\predsym_i) = 
	- \sum_{j=1}^m \left(
		 (\tilde a^\star_{ij} \flatsym_j)(t)
		 - (\tilde b^\star_{ij} \oinputsym_j)(t)
		 \right)
	\label{eq:dde_system_analytic}
\end{equation}
for $i=\{1,\dotsc,m\}$ in the analytic setting.
Herein, the operators 
$\tilde a^\star_{ij}$ and $\tilde b^\star_{ij}$
by construction have compact support,
which is why their application yields
definite integrals whose limits depend on the degrees of
$\tilde{\transmat}(\shiftsym,\diffsym)$ and
$\mat L^{-1}(\shiftsym,\diffsym)$ w.r.t.~$\shiftsym$, respectively.
Thus, \eqref{eq:dde_system_analytic} states a well-posed initial value problem
if the components $\flatsym_i(t+\tau)$ of the flat output are given for $\tau
\in [\delaysym_i,\predsym_i]$ with $\delaysym_i = \deg^-_\shiftsym
\rtransmatsym_{ii}(\shiftsym,\diffsym)$ and the input $\oinput(t)$ is known for
$t \in \reals$.

Hence, we introduce the \gls{hcf} state
\begin{subequations}
\begin{align}
	\nfodesym_{i,j}(t) &\coloneq \flatsym_i^{(j-1)}(t),
	\; j \in \{1,\dotsc,\kron_i\}\\
	\nfpdesym_{i}(\nfspatsym, t) &\coloneq 
		\flatsym_i^{(\kron_i)}(t+\delaysym_i+\nfspatsym)
	,\;\nfspatsym\in[\nfleftbound{i},\nfrightbound{i}]
\end{align}%
\label{eq:hcf_state_def}%
\end{subequations}
with $\nfrightbound{i}=\predsym_i - \delaysym_i$
for $i\in\{1,\dotsc,m\}$ which is equivalent to the flat output on these intervals.
Substituting\footnote{%
	Some expressions in $\flatout(t)$ may remain as e.g.~the values
	$\flatsym_i^{(d)}(t+\nfspatsym) $ are not part of the \gls{hcf} state
	$\nfstate(\cdot,t)$ for $d < \kron_i$, cf.~\eqref{eq:mimo_hcf}.
	However, these expressions can be easily eliminated by means 
	shown in \cite[Lem. 5]{Gehring2023}.
}
the definitions \eqref{eq:hcf_state_def} in \eqref{eq:dde_system_analytic}
then allows to identify the operators 
$a^\star_{ij}, b^\star_{ij}$ as well as weights $\bm a_{ij}$ 
in \eqref{eq:mimo_hcf}, yielding the \gls{hcf} of system \eqref{eq:orig_sys}.
\begin{lem}
	The type of \gls{hcf} obtained %
	is a quasi controller form, (cf.~\Cref{sss:gform_pred_and_delay}), 
	with only input predictions and delays but no input derivatives.
\end{lem}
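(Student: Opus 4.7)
The plan is to show that the operator $\hat\transmat^{-1}\mat L^{-1}(\shiftsym,\diffsym)$ acting on $\oinput$ in \eqref{eq:dde_system} corresponds, in the time domain, to an input map of the form $\mat B^\star = [\stieltjes_{b^\star_{i,j}}]$ with kernels $b^\star_{i,j} \in \boundvar([-\bar\tau_{i,j},-\underline{\tau\!}_{i,j}])$, thereby matching \Cref{sss:gform_pred_and_delay} without any derivatives of $\oinput$.

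First I would argue that every entry of $\mat B^\star$ still lies in $\realgwholepolys$. By construction in \Cref{alg:shift_red}, $\mat L$ is a product of elementary lower-triangular matrices of the form $\ident - q^\star_{i,j}\,\mat{e}_i\mat{e}_j\trans$ with $q^\star_{i,j}\in\realgwholepolys$ produced by \Cref{lem:gdiv_whole}. Each such factor inverts to $\ident + q^\star_{i,j}\,\mat{e}_i\mat{e}_j\trans$, which again lies in this class. Since $\realgwholepolys$ is closed under addition and multiplication, $\mat L^{-1}$ and the left-multiple by the constant real matrix $\hat\transmat^{-1}$ likewise have all entries in $\realgwholepolys$. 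The Paley--Wiener theorem \cite[Thm.~7.3.1]{Hoermander1983} — already invoked in \Cref{lem:gdiv_whole} — then identifies each entry of $\mat B^\star$ with the Laplace transform of a compactly supported distribution whose support interval is precisely $[-\bar\tau_{i,j},-\underline{\tau\!}_{i,j}]$, with the endpoints read off from $\deg_\shiftsym^{-}$ and $\deg_\shiftsym^{+}$ of the entry. As $\ggen$ contains both positive and negative shifts, predictions as well as delays may in general appear.

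The remaining and most delicate step — and the main obstacle — is to show that these distributions are of order zero, so that the pairing against $\oinputsym_j$ is a genuine Stieltjes integral without any $\oinputsym_j^{(k)}$, $k\ge 1$, contributions. The degree condition $\deg_{\diffsym}x\le\deg_{\diffsym}y$ enforced in \Cref{lem:gdiv_whole} ensures that the rational factor $a_1/b_1$ driving the initial quotient is proper in $\diffsym$, and each step of the iteration \eqref{eq:discont_iteration} only divides by a further linear factor $s-s_k$, never re-introducing positive powers of $s$. Consequently, $q^\star(\shiftsym,\diffsym)$ remains bounded on every vertical line in the $s$-plane (extended by continuity through the cancelled poles), with growth at most $O(e^{R|\Im s|})$ controlled by the largest shift magnitude $R$. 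The order-zero branch of Paley--Wiener then places $q^\star$ in correspondence with a $\boundvar$ kernel of compact support. Substituting into \eqref{eq:dde_system_analytic} rewrites the input term as $\sum_{j=1}^{m}\stieltjes_{b^\star_{i,j}}\!*\oinputsym_j$, matching \eqref{eq:mimo_hcf_input} with kernels $b^\star_{i,j}\in\boundvar([-\bar\tau_{i,j},-\underline{\tau\!}_{i,j}])$ and no input derivatives, which is precisely the quasi controller form of \Cref{sss:gform_pred_and_delay}.
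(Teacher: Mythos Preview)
Your proposal is correct and follows essentially the same route as the paper: both arguments reduce to showing that each $q^\star_{i,j}$ produced by \Cref{lem:gdiv_whole} has non-positive degree in $\diffsym$, so that the entries of $\mat L^{-1}$ (obtained either by inverting the elementary lower-triangular factors as you do, or by forward substitution as in the paper) inherit this bound and therefore act on $\oinput$ without producing derivatives. Your Paley--Wiener\,/\,order-zero phrasing is a valid restatement of the paper's direct degree count; the only imprecision is that iteration \eqref{eq:discont_iteration} starts from the numerator $\tilde q$ (which \emph{does} carry positive powers of $s$) and lowers the $s$-degree by one at each of the $D=\deg d$ steps, rather than ``never re-introducing'' them---but this leads to the same non-positive degree for $q^\star$ that you need.
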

\begin{proof}
	The proof of \Cref{lem:gdiv} yields $\deg_{\diffsym} q(\shiftsym, \diffsym)
	< 0$ due to the structure of $d(\diffsym)$. 
	Then by \Cref{lem:gdiv_whole}, $q^\star(\shiftsym, \diffsym)$ has
	non-positive degree w.r.t.~$\diffsym$ as do all entries in $\mat L_i(\shiftsym, \diffsym)$
	from \Cref{alg:shift_red}.
	Now, as the inverse of the triangular $\mat L(\shiftsym, \diffsym)$
	is computed by
	forward substitution, all $\bar q^\star_{ij}(\shiftsym, \diffsym)$ 
	therein are linear combinations of the
	$q_{ij}^\star(\shiftsym, \diffsym)$, hence,
	$\deg_{\diffsym} \bar q^\star_{ij}(\shiftsym, \diffsym) \le 0$.
	Therefore, the application of the $\tilde b^\star_{ij}$ yields 
	(distributed) shifts but no derivatives.
\end{proof}

\section{Reduction of the benchmark system}%
\label{sec:example}

Consider again the system from \Cref{ssc:benchmark}
with $\leftbound = \pi$, $\rightbound = 10$ and $\mass = 1$
substituted into \eqref{eq:zuazua_param_u} yielding
\begin{equation*}
	\transmat(\shiftsym,\diffsym) 
	=
	\frac{1}{4} 
	\begin{pmatrix}
		a_1(\diffsym)\shiftsym^{\pi} + a_2(\diffsym)\shiftsym^{-\pi}
			& -\shiftsym^{\pi} -\shiftsym^{-\pi} 
			\\
			b_1(\diffsym)\shiftsym^{10} + b_2(\diffsym)\shiftsym^{-10}
			& \shiftsym^{10} +\shiftsym^{-10} 
			\\
	\end{pmatrix}
\end{equation*}
with $a_1(\diffsym) = b_1(\diffsym) = \diffsym^2 + 2\diffsym$ and 
$a_2(\diffsym) = b_2(\diffsym) = \diffsym^2 - 2\diffsym$
for which
$
\hat\transmat = \lccm_{\shiftsym,\diffsym} \transmat(\shiftsym,\diffsym)
=
\lccm_{\diffsym,\shiftsym} \transmat(\shiftsym,\diffsym) 
$ 
is singular as shown in \eqref{eq:zuazua_lccm} which is why the degree
conditions \eqref{eq:req_degree_cond} from \Cref{sss:reduction_requirements} fail;
e.g.
$\deg_\shiftsym \transmatsym_{21}(\shiftsym,\diffsym) 
= 20 > \deg_\shiftsym \transmatsym_{11}(\shiftsym,\diffsym) = 2\pi$.

Continuing with 
\Cref{sss:preliminary_transforms}, 
no superfluous delays are present,
rendering the delay triangularisation and homogenisation obsolete.
Furthermore, the row degrees w.r.t.~$\shiftsym$ are already ordered
and the columns do not need to be reordered as
their entries satisfy the degree condition \eqref{eq:req_derivative_ordering}.

Therefore, by means of \Cref{lem:gdiv} %
the quotient
\begin{equation*}
\scale[0.9]{
	\begin{aligned}
		q_{21}&(\shiftsym,\diffsym) \!=\! 
		\sigma^{10 - \pi}
		\!+\! \frac{2 - \diffsym}{\diffsym + 2}\sigma^{10 - 3 \pi} 
		\!-\! \frac{\diffsym + 2}{\diffsym - 2}\sigma^{-10 + 3 \pi}
		\!+\! \sigma^{-10 + \pi} 
	\end{aligned}
}
\end{equation*}
is obtained, where one iteration was needed for each case
in \eqref{eq:quotient_choice}.
Next, \Cref{lem:gdiv_whole} yields the correction term
$p(\diffsym) = \frac{16 e^{-20 + 6\pi}}{\left(\diffsym^{2} - 4\right)}$
such that $q_{21}^\star(\shiftsym,\diffsym) \in \realgwholepolys$.
Using the latter in \eqref{eq:shift_elimination} then yields
the reduced element
\begin{equation*}
\scale[0.9]{
	\begin{aligned}
	\bar \transmatsym_{21}(\shiftsym,\diffsym) =
	\!- \frac{16 \diffsym}{\left(\diffsym - 2\right) e^{20 - 6 \pi}}\sigma^{\pi}
	\!+ \frac{\diffsym\left(\diffsym^{2} + 4 \diffsym + 4\right)}{\diffsym - 2}\sigma^{-10 + 4 \pi}\\
	\!+ \frac{\diffsym \left(\diffsym^{2} - 4 \diffsym + 4\right)}{\diffsym + 2} \sigma^{10 - 4 \pi}
	\!- \frac{16 \diffsym}{\left(\diffsym + 2\right) e^{20 - 6 \pi}}\sigma^{-\pi}.
	\end{aligned}
}
\end{equation*}
with 
$\deg_\shiftsym \bar\transmatsym_{21}(\shiftsym,\diffsym) 
= \deg_\shiftsym \transmatsym_{11}(\shiftsym,\diffsym)$.
As the requirements for the second column are already satisfied, this concludes
the reduction and we obtain
$\ltransmat = \lccm_{\diffsym,\shiftsym}\rtransmat(\shiftsym,\diffsym) = \diag(1, -2)$.
\footnote{%
The complete computation of the \gls{hcf} for the example with annotated steps can be found at
\url{https://github.com/umit-iace/iace-pub-2026-ecc}.
}

\section{Concluding remarks}%
\label{sec:outlook}

This contribution establishes a \gls{hcf} for \acs{mimo} systems with
dynamic boundaries and proposes a flatness based algorithm to compute
said form for a specific class of systems.
While the algebraic approach taken to obtain this result is limited
to linear systems, it enables a systematic procedure which
lends well to an implementation using computer algebra.
However, the insight gained from an analytic interpretation of the
algebraic procedure allows for the technique to be extended to nonlinear boundary
systems, which will be discussed in a forthcoming publication.
Furthermore, the obtained \gls{hcf} contains input delays, which will result in
a dynamic extension, thus further analysis is required to obtain a form without
delays as this would enable the application of quasi-static state feedback.
Finally, the system class \eqref{eq:orig_sys} is lacking in-domain couplings.
As this class was recently addressed in \cite{Schmidt2025} 
without boundary dynamics, the next logical step is to combine the two approaches.

\bibliographystyle{IEEEtran}
\bibliography{IEEEabrv,./common/bib/abbreviations.bib,./common/bib/sources.bib}

\end{document}